\newtheorem{theorem}{Theorem}[section]
\newtheorem{lemma}[theorem]{Lemma}
\newtheorem{proposition}[theorem]{Proposition}
\newcommand{\dvol}{\mathrm{dvol}}
\newcommand{\half}{\frac{1}{2}}
\newcommand{\expab}[2]{ e^{(#1\alpha #2)\phi} }
\title{An Exact Conformal Symmetry Ansatz on Kaluza-Klein Reduced TMG}
\author{George Moutsopoulos\footnote{gmoutso@googlemail.com}}
\affil{Department of Electrophysics, National Chiao-Tung University, Hsinchu, Taiwan}
\author{Patricia Ritter\footnote{p.d.ritter@sms.ed.ac.uk}}
\affil{{Maxwell Institute and School of Mathematics,
University of Edinburgh, UK}}
\begin{document}

\maketitle
\begin{abstract}
Using a Kaluza-Klein dimensional reduction, and further imposing a conformal Killing symmetry on the reduced metric generated by the dilaton, we show an Ansatz that yields many of the known stationary axisymmetric solutions to TMG.
\end{abstract}
\newpage
\section{Introduction}
Topologically massive gravity (TMG)~\cite{Deser:1981wh,Deser:1982vy} and its solutions have been studied extensively, even more so recently, due to the conjecture of~\cite{Anninos:2008fx} regarding a CFT dual to spacelike warped AdS$_3$ black holes. For a critical value of the theory, the holography of null warped AdS$_3$ was studied extensively in~\cite{Guica:2010sw}, see also \cite{Anninos:2010pm}. However, spacelike warped AdS$_3$ has different asymptotics to AdS$_3$ or the Schr\"odinger background and a similar analysis cannot be made.

The largest class of known solutions to TMG is the Kundt class~\cite{Chow:2009vt}, which includes the TMG wave~\cite{Gibbons:2008vi} and spacelike warped AdS$_3$; the odd one out is timelike warped AdS$_3$, which is not a Kundt spacetime~\cite{Anninos:2008fx}. Various other solutions can be written up to identifications with one of the above~\cite{Chow:2009km}. One of our motivations here was the search for an ``intermediate'', or ``interpolating'' solution between AdS$_3$ and spacelike warped AdS$_3$, for generic values of the theory, which could be relevant to the warped-AdS/CFT correspondence. 

Numerical solutions that are asymptotic to warped AdS$_3$ were found in \cite{Ertl:2010dh}, wherein the same question as ours is posed. The ambition was further encouraged by \cite{Guralnik:2003we}, where an interesting solution was found for the purely gravitational Chern-Simons term that appears in the TMG action. These solutions can be related at a local level to kinks with interpolating behaviour, see also~\cite{Grumiller:2003ad}. The hope was to generalise their approach to include the Einstein-Hilbert action, and search for a similar solution for the full model.

In \cite{Guralnik:2003we}, the authors used a Kaluza-Klein (KK) dimensional reduction on the three-dimensional theory, to obtain a system of differential equations in 2 dimensions. For the purely Chern-Simons part of the action, it turns out that one of the equations of motion is a conformal Killing equation on the gradient of one of the reduced fields. It is the presence of this new symmetry that allows a simple solution to the problem. 

We will see below that the approach of \cite{Guralnik:2003we} does not generalise in a simple way for the full TMG action. Recalling the classification of Pope \textit{et al.} \cite{Chow:2009km} and the Kundt solutions to topologically massive gravity \cite{Chow:2009vt}, we will show that our ``kinky'' approach only leads to a subset of these. The symmetries imposed by the Ansatz, \textit{i.e.} an isometry along which to perform the KK-reduction and an exact conformal Killing symmetry generated by the dilaton, are too restrictive to yield new solutions. The approach does however yield locally most of the known stationary axisymmetric solutions of TMG as collected in \cite{Ertl:2010dh}. 

Although these solutions are not gravitational kinks, we have retained use of the word since our method is influenced by~\cite{Guralnik:2003we}. In section 2 we set up our notation and introduce some helpful theorems to streamline our derivation. In section 3 we motivate our Ansatz and in section 4 we identify the solutions it yields. We end with concluding remarks. We also provide four complementary appendices. Appendix~\ref{app:cotton} shows the reduction of the Cotton tensor explicitly, appendix~\ref{app:null} comments on the ``null $d\psi$'' case, appendix~\ref{app:mapsol} has a map of all known solutions to TMG, and appendix~\ref{app:stataxis} comments on the stationary axisymmetric solutions with respect to our method.

\section{Setup and notation}
In this first section we derive the equations of motion of the reduced action and set up some theorems that simplify the ensuing analysis. Note that the one-dimensional reduction in $d=3$ that we perform here is in a sense the equivalent of an $S^2$ (Pauli) reduction in $d=4$.

\subsection{2d reduced action}
We write the full TMG action as
\[ 16\pi G\,S[g]=\int d^3x \sqrt{-g}\left(R + \frac{2}{\ell^2} 
 + \frac{\ell}{6\nu} \epsilon^{\lambda \mu \nu} \Gamma^r_{\lambda\sigma}
\left( \partial_\mu \Gamma^\sigma_{r\nu} + \frac{2}{3}\Gamma^\sigma_{\mu\tau}\Gamma^\tau_{\nu\rho}\right)\right)~.
\]
We follow the usual KK-reduction set-up, starting with a 3-dimensional
metric
\begin{equation*}
  \label{eq:2}
    g^{(3)}
  =e^{2\alpha\phi}\bar{g} \pm e^{2 \phi}\left(dz+ A\right)^2\,,
\end{equation*}
where we assumed the isometry $z\mapsto z+\xi$. The field $\phi$ is a function and $ A$ a one-form on the remaining two coordinates. We raise/lower the 2-dimensional tensorial indices $a$ and $b$ with the metric $\bar
g_{ab}$. The $\pm$ sign distinguishes spacelike and timelike reductions.
 We could absorb the $\alpha$ parameter above into $\bar{g}$, but
we choose to leave this free for now. This freedom will allow us to find various solutions from
one simple Ansatz. 

By $D_a$ we denote the 2-dimensional covariant derivative and write $D^2=D_aD^a$. The field strength $F=dA=f\dvol_{\bar{g}}$ defines the scalar $f$ in 2 dimensions by its Hodge dual. The 3-dimensional scalar curvature $R$ written in terms of the
2-dimensional curvature $\bar R$ is given by
\begin{equation}\label{eq:Ricciscalarred}\begin{aligned}
 R& = e^{-2\alpha\phi} \bar{R} - 2 \left( 
 \alpha+ {1}\right) e^{-2\alpha\phi} D^2\phi -2  e^{-2\alpha\phi} | d\phi|^2 + \frac{1}{2} e^{-4\alpha\phi+2\phi} f^2  ~
\end{aligned}
\end{equation}
and the Einstein-Hilbert part of the action is therefore
\begin{equation}
I_{EH}=\int \dvol_{\bar{g}} \left[ e^{\phi} \bar{R}
+  2\alpha e^{\phi} |d\phi|^2 + \frac{1}{2} e^{(-2\alpha+3)\phi } f^2  -2 (1+\alpha) D_a \left( e^{\phi}D^a\phi \right)\right] .
\end{equation}

To KK-reduce the Chern-Simons-like terms in the action, we make use of
the results of \cite{Guralnik:2003we}. Schematically, in their set-up ($\alpha={1}$)
\begin{equation*}
\left(\epsilon \Gamma (\frac{1}{2}\partial \Gamma + \frac{1}{3}\Gamma\Gamma)\right)\longrightarrow -\frac{1}{2}\sqrt{- g}(F\bar R + F^3)\,.
\end{equation*}
This schematic result has to be corrected by exponential factors for generic $\alpha$. This can be easily done, since the metric of \cite{Guralnik:2003we} is  conformally related to our generic one by $g_{ab}=e^{2(\alpha-1)\phi}\bar g_{ab}$. We thus obtain the action
\begin{equation}
I_{CS}=\pm\frac{1}{2\mu}\int \dvol_{\bar{g}} \left( e^{(-2\alpha+2)\phi} f \bar{R} -2 (\alpha-{1}) e^{(-2\alpha+2)\phi} f D^2 \phi + e^{(-4\alpha+4)\phi} f^3 \right)~,
\end{equation}
where $\mu=3\nu/\ell$. Both parts of the action are valid for either sign of the reduction. 

\subsection{Equations of motion}\label{subsec:eoms}
We can now either vary\footnote{the variation of $f$ is $\delta f = \frac{1}{2} f g_{\mu\nu} \delta [g^{\mu\nu}] \mp \epsilon^{\mu\nu}\partial_\mu \delta A_\nu $.} the reduced action, or reduce the 3-dimensional equations. Either way one obtains the same set of equations of motion. 
The consistency of dimensional reduction on a unimodular or one-dimensional group is well-known\footnote{see e.g.~\cite{Cvetic:2003jy,Pons:2006vz,Torre:2010xa}.} and still holds for the Chern-Simons correction. Never the less, we have laboriously gone both ways, and appendix~\ref{app:cotton} contains the explicit form of the Cotton tensor under the reduction. The equations of motion are:
\begin{equation}
\begin{array}{ll}
\mathbf{F:}\qquad
c&=
\pm 2 \mu\, \expab{-2}{+3} f + e^{2({1}-\alpha)\phi} \bar{R} - 2(\alpha-{1})e^{2({1}-\alpha)\phi}D^2\phi+3 e^{4({1}-\alpha)\phi}f^2,\\ \ \\
\textbf{Dil:} \,
-\frac{6}{\ell^2} &=e^{-2\alpha \phi}\bar{R}-2(\alpha+{1}) e^{-2\alpha\phi} D^2 \phi -2  e^{-2\alpha\phi} |d\phi|^2+\half \expab{-4}{+2} f^2, \\ \ \\
\textbf{Kink:}\ \ 
0&=D^2 e^{\phi} + \half \expab{-2}{+3} f^2-\frac{2}{\ell^2} \expab{2}{+1}\pm\frac{1}{2\mu}\Big( D^2(\expab{-2}{+2}f) \\ \ \\
&\ \ +\expab{-2}{+2}f \bar{R} -2(\alpha-{1})\expab{-2}{+2} f D^2 \phi +2\expab{-4}{+4} f^3 \Big), \\ \ \\ 
\textbf{CKV:}\ \ 
0 &=e^{2\alpha\phi} D_{(a}\left( e^{-2\alpha\phi} D_{b)} e^{\phi}\right)\pm\frac{1}{2\mu}e^{2(\alpha-{1})\phi} D_{(a}\left( e^{-2(\alpha-{1})\phi} D_{b)}(e^{-2(\alpha-{1})\phi} f)\right) \\ \ \\
 \ &\ \  -\bar g_{ab}\left[e^{2\alpha\phi} D^c\left( e^{-2\alpha\phi} D_c e^{\phi}\right)\pm\frac{1}{2\mu}e^{2(\alpha-{1})\phi} D^c\left( e^{-2(\alpha-{1})\phi} D_c(e^{-2(\alpha-{1})\phi} f)\right)\right].
\\ \ \\
\end{array}
\end{equation}
In the \textbf{F} equation $c$ is a constant of integration that a solution will fix. The \textbf{CKV} equation is the traceless part of the Einstein equation and round brackets around indices indicate symmetrization of strength one. The trace of the Einstein equation is what we call the \textbf{Kink} equation. For the dilaton equation (\textbf{Dil.}) we used \eqref{eq:Ricciscalarred} and the constant scalar curvature $-6/\ell^2$ of the 3d geometry. The two-dimensional equations have been examined before, e.g. in the conformal gauge in \cite{Clement:1990mp}.

The equations exhibit two types of ``symmetry'': a scaling of $z\mapsto \xi\,z$ and a shift of $\alpha\mapsto \alpha + \tilde\xi$. The former rescales the fields as $e^{\phi} \mapsto \xi^2 e^{\phi}$, $\bar{g} \mapsto \xi^{-2\alpha}\bar{g}$ and $f\mapsto \xi^{2\alpha-1}f$, and can be used to normalize $c$. The latter transforms fields as $\bar{g} \mapsto e^{-2\tilde\xi\phi}\bar{g}$ and $
f \mapsto e^{2\tilde\xi\phi} f$~, 
whereas it leaves $\phi$ unchanged. Using this, $\alpha$ can be fixed from the onset but, as mentioned above, we keep this freedom and let it be fixed by a consistency requirement on our Ansatz.

\subsection{Conformal Killing vectors}\label{sec:theorems}
Let us now focus on the last of the above equations of motion, labeled \textbf{CKV} because of its similarity to a conformal Killing equation. In fact it contains the conformal Killing equation of \cite{Guralnik:2003we} for $D_a (e^{-2(\alpha-{1})\phi}f)$, coming from the purely Chern-Simons part of the action, but is complicated by the similar equation for $D_a e^{\phi}$ coming from the Einstein-Hilbert term. Nevertheless, this equation motivates us to search for solutions where its content is that of a {single} conformal Killing vector equation. 

This is both for simplicity, but also in the hope of finding  behaviour similar to \cite{Guralnik:2003we}. Let us first list a set of propositions that will help us in the subsequent analysis. The first proposition is used to fix the metric.

\begin{proposition}\label{prop:dpsi}

If $\bar{g}$ has a conformal Killing one-form $d\psi$ that is non-null and exact, then the metric can be written in some coordinate system as
\begin{equation}
\label{eq:dpsimetric} 
\bar{g} = \frac{d\psi(x)}{dx}(dx^2-dt^2) ~.
\end{equation}
\end{proposition}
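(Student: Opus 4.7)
The plan is to build adapted coordinates out of $\psi$ and a $\bar g$-orthogonal companion, then exploit the conformal Killing property to separate variables.

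First, since $d\psi$ is non-null, $\psi$ can be used as one coordinate. In two dimensions the distribution $\bar g$-orthogonal to $d\psi$ is one-dimensional and hence automatically integrable by Frobenius, so one can pick a function $\zeta$ satisfying $\bar g^{ab}\partial_a\psi\,\partial_b\zeta=0$ and work in the chart $(\psi,\zeta)$. The inverse metric is then diagonal in this chart, and hence so is the metric itself:
\[
\bar g = A(\psi,\zeta)\, d\psi^2 + B(\psi,\zeta)\, d\zeta^2,
\]
with $A\neq 0$ by non-nullness of $d\psi$, and with $A,B$ of opposite signs on account of the Lorentzian signature implicit in the target form.

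Next, the conformal Killing vector dual to $d\psi$ is simply $K = A^{-1}\partial_\psi$. A direct Lie-derivative calculation in this chart gives
\[
\mathcal{L}_K \bar g_{\psi\zeta} = -\frac{\partial_\zeta A}{A},\qquad \mathcal{L}_K \bar g_{\psi\psi} = -\frac{\partial_\psi A}{A},\qquad \mathcal{L}_K \bar g_{\zeta\zeta} = \frac{\partial_\psi B}{A}.
\]
Imposing $\mathcal{L}_K \bar g_{ab} = \lambda\,\bar g_{ab}$, the off-diagonal entry forces $\partial_\zeta A = 0$, so $A = A(\psi)$; matching the conformal factors on the two diagonal entries then forces $\partial_\psi(AB) = 0$, so $AB = h(\zeta)$ depends on $\zeta$ alone, with $h<0$ in Lorentzian signature.

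Finally, the coordinate change $dx = A(\psi)\,d\psi$ and $dt = \sqrt{-h(\zeta)}\,d\zeta$ is nondegenerate on a neighbourhood since neither $A$ nor $h$ vanishes there. Under it $\psi$ becomes a function of $x$ alone with $\psi'(x)=1/A(\psi(x))$, and substitution collapses the metric to
\[
\bar g = \psi'(x)\,(dx^2-dt^2),
\]
the advertised form. The only real delicacy is the signature bookkeeping that keeps the root defining $t$ real; past that, the argument is elementary separation of variables in a well-chosen chart.
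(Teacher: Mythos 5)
Your proof is correct, but it takes a genuinely different route from the paper. The paper works in conformal (double-null) gauge, $\bar g=\Lambda(u,v)\,du\,dv$, invokes the general form $X=g(v)\partial_v+h(u)\partial_u$ of a conformal Killing vector in that gauge, uses non-nullness ($gh\neq0$) to reparametrize the null coordinates so that $X=\partial_x$, and only then uses exactness of $\bar g(X)=d\psi$ to conclude $\Lambda=\psi'(x)$. You instead build the chart directly out of $\psi$ and an orthogonal companion $\zeta$ (a first integral of the gradient flow), diagonalize $\bar g=A\,d\psi^2+B\,d\zeta^2$, and solve the conformal Killing equations for $K=A^{-1}\partial_\psi$ by hand: the off-diagonal equation gives $A=A(\psi)$, the diagonal ones give $\partial_\psi(AB)=0$, and separate rescalings of $\psi$ and $\zeta$ produce the advertised form with $\psi'(x)=1/A$. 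Your Lie-derivative components and the final bookkeeping check out. What your route buys is independence from the existence of conformal gauge and from the classification of two-dimensional conformal Killing fields -- everything is an explicit computation in an adapted chart; what the paper's route buys is brevity and a setup (null coordinates, $g(v)$ and $h(u)$) that is reused verbatim in Proposition 2.2. One small sharpening: the real role of non-nullness in your argument is not merely $A\neq0$ (nondegeneracy would give that anyway) but the guarantee that $d\zeta$ can be chosen linearly independent of $d\psi$, so that $(\psi,\zeta)$ is a genuine chart with diagonal metric; if $d\psi$ were null the orthogonal direction would be $d\psi$ itself and the construction degenerates, consistent with the separate treatment of the null case in the paper's appendix. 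Like the paper, you implicitly take $\bar g$ Lorentzian, which is what the target form presupposes.
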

\begin{proof}
In a conformal gauge, the metric can be written as
$ g = \Lambda(x,t) du dv\,, $
where conformal Killing vectors are of the form
$ X = g(v) \partial_v + h(u)\partial_u ~.$
The condition that $ \bar{g}(X)$ is non null ($g(v) h(u)\neq0$), allows us to change coordinates
\begin{align*}
u &\mapsto \int \frac{1}{h(u)}, \\
v &\mapsto \int \frac{1}{g(v)},
\end{align*}
so that $g = \Lambda(x,t) (dx^2-dt^2)$ with $X=\partial_x$, and $\bar{g}(X)=d\psi$ implies $\Lambda=\psi'(x)$.
\end{proof}

The following proposition will be needed to complete our Ansatz.
\begin{proposition}\label{prop:twopsi}
Assume two non-null conformal Killing one-forms, $F_1 d F_2$ and $d\psi$, with $F_1$ and $F_2$ functions of $x$ in the adaptive coordinate system of proposition \ref{prop:dpsi}, equation \eqref{eq:dpsimetric}. They are necessarily related by
\[ F_1\, d F_2 = \tilde{k}\, d\psi \]
for some constant $\tilde{k}$.
\end{proposition}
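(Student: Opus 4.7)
The plan is to work in the adapted coordinates supplied by Proposition \ref{prop:dpsi} and exploit the explicit form of two-dimensional conformal Killing vectors that appeared in its proof. First I would pass to null coordinates $u=x+t$, $v=x-t$, in which the metric reads
\[
\bar{g} = \psi'(x)\,du\,dv,
\]
and any conformal Killing vector takes the standard form $X = h(u)\,\partial_u + g(v)\,\partial_v$. As a sanity check, $d\psi = \psi'(x)\,dx$ is the metric dual of $\partial_x = \partial_u + \partial_v$, which is the CKV corresponding to $h=g=1$.

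Next I would compute the metric dual of the one-form $F_1\,dF_2 = F_1(x)F_2'(x)\,dx$. Using $\partial_x = \partial_u + \partial_v$ and $\bar{g}^{xx} = 1/\psi'(x)$, this dual vector is
\[
\frac{F_1(x)F_2'(x)}{\psi'(x)}\bigl(\partial_u + \partial_v\bigr).
\]
The hypothesis that $F_1\,dF_2$ is a conformal Killing one-form then forces the identification
\[
h(u) = g(v) = \frac{F_1(x)F_2'(x)}{\psi'(x)}.
\]

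The only remaining step is a trivial separation-of-variables observation: a single function of $x$ alone that also equals a function $h$ of $u = x+t$ alone must be constant, since fixing $x$ and varying $t$ sweeps $u$ over an interval on which the right-hand side is constrained to a single value; the same argument works for $g(v)$. Calling this common constant $\tilde k$, I obtain $F_1(x)F_2'(x) = \tilde k\,\psi'(x)$, which is exactly the pointwise identity $F_1\,dF_2 = \tilde k\,d\psi$. I do not foresee a genuine obstacle here: the non-null hypothesis on $F_1\,dF_2$ serves only to guarantee $F_1 F_2' \neq 0$ so that the constant $\tilde k$ is nonzero, and all the real content lies in the null-coordinate normal form of CKVs combined with the separation step.
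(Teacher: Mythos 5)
Your proof is correct and follows essentially the same route as the paper: both use the null-coordinate normal form $h(u)\partial_u + g(v)\partial_v$ of two-dimensional conformal Killing vectors together with a separation-of-variables argument to force $h$ and $g$ to be constant; you merely dualize the one-form $F_1\,dF_2$ to a vector, whereas the paper dualizes the vector to a one-form and compares components.
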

\begin{proof}
$F_1 d F_2$ is dual to a conformal Killing vector $X = g(v) \partial_v + h(u)\partial_u ~$,
for some functions $g(v)$ and $h(u)$. Since the left hand side of
\[ F_1 d F_2 = \frac{\psi'(x)}{2}\left( \left(g(x+t) + h(x-t)\right)dx +  \left(g(x+t) - h(x-t)\right)dt \right) ~.\]
is a function of $x$, we have $g(x+t)=h(x-t)=\text{const.}~.$
\end{proof}
Finally, we have
\begin{proposition}\label{prop:Zcons}
Take $d\psi$ to be the metric dual to a conformal Killing vector as before. Then in the adapted coordinates we define
\[ Z= \frac{1}{\psi'} \frac{d}{dx} \]
for which
\begin{equation}\label{Z and R} Z D^2 \psi = - \bar{R} ~.\end{equation}
\end{proposition}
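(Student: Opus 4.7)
The plan is to work in the adapted coordinates $(x,t)$ supplied by Proposition~\ref{prop:dpsi}, where $\bar{g}=\psi'(x)(dx^2-dt^2)$, and simply evaluate both sides of the identity directly. Since everything in sight depends only on $x$ and the metric is conformally flat in a very explicit way, the proof should be a short computation with no conceptual surprises.

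The first step is to compute $D^2\psi$. Because $\psi=\psi(x)$ and $\sqrt{-\bar g}=\psi'(x)$, the Laplace--Beltrami formula reduces to a one-variable derivative: $D^2\psi=\tfrac{1}{\psi'}\partial_x(\psi'\,\bar g^{xx}\,\psi')$, and using $\bar g^{xx}=1/\psi'$ this simplifies to $\psi''/\psi'$. The second step is to compute $\bar R$. Writing $\bar g=e^{2\sigma}\eta$ with $\sigma=\tfrac{1}{2}\ln\psi'(x)$ and $\eta=dx^2-dt^2$, I would invoke the standard two-dimensional identity $\bar R=-2e^{-2\sigma}\Box_\eta\sigma$; since $\partial_t\sigma=0$, this collapses to $\bar R=-\tfrac{1}{\psi'}\partial_x(\psi''/\psi')$.

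Combining the two pieces with the definition $Z=(1/\psi')\,d/dx$ gives $Z\,D^2\psi=\tfrac{1}{\psi'}\partial_x(\psi''/\psi')=-\bar R$, as claimed. The only pitfall I anticipate is keeping track of signs arising from the Lorentzian conformal factor formula, where conventions from Riemannian references can differ by a sign; so I would double-check the $-2e^{-2\sigma}\Box_\eta\sigma$ formula in signature $(+,-)$ before concluding. Beyond this sign bookkeeping, the proposition is an immediate computation once the form \eqref{eq:dpsimetric} is in hand.
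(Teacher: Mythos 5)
Your proof is correct and follows essentially the same route as the paper: both write $\bar g = e^{2\sigma(x)}(dx^2-dt^2)$ with $e^{2\sigma}=\psi'$, use $\bar R = -2e^{-2\sigma}\sigma''$ and the explicit form of the Laplacian to get $D^2\psi=\psi''/\psi'$, and then apply $Z$. Your use of the Laplace--Beltrami divergence formula instead of $D^2=e^{-2\sigma}(\partial_x^2-\partial_t^2)$ is only a cosmetic difference, and your sign for the Lorentzian conformal-factor formula matches the paper's convention.
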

\begin{proof}
For $g= e^{2\sigma(x)}(dx^2-dt^2)$, the Laplacian is
$D^2=e^{-2\sigma(x)}(\partial_x^2-\partial_t^2)$. At the same time, the curvature scalar is
 $\bar{R}= -2 e^{-2\sigma(x)}\sigma''(x)$. We substitute $\psi'(x)=e^{2\sigma(x)}$.\end{proof}

The reason why the purely Chern-Simons term in~\cite{Guralnik:2003we} has a unique solution up to homothety, is precisely because the Chern-Simons term is conformally invariant and, with $\alpha=1$, the \textbf{CKV} equation becomes that of a single exact conformal Killing vector equation. Proposition~\ref{prop:dpsi} is used here as in \cite{Guralnik:2003we} to fix the metric. However, the \textbf{CKV} equation here is at best a sum of two such ``exact conformal Killing vector equation'' terms. If we impose that they vanish separately, propositions \ref{prop:dpsi} and \ref{prop:twopsi} combined give a class of unique candidate solutions. Proposition~\ref{prop:Zcons} is then used as in \cite{Guralnik:2003we} to check for the consistency of the candidate solution. 

Equivalent statements to the above propositions for a \emph{null} one-form $d\psi$ can also be written. However, our method for the null case does not lead to any solutions. We comment on the null case in appendix~\ref{app:null}.

\section{A general Ansatz}\label{sec:Ansatz}
Before moving onto a general Ansatz involving functions generating conformal Killing vectors, we glance briefly at the simplest solution to the equations of motion.
\subsection{Constant \texorpdfstring{$f$ or $\phi$}{fields}}
\label{sec:example-solutions}
From our Kaluza-Klein Ansatz, it is clear that we can obtain known solutions to TMG by simply setting $f$ and $\phi$ to constant values $f=f_0$, $\phi=\phi_0$. For simplicity, let us set here $\alpha=1$. From the {dilaton} (\textbf{Dil.}) equation of motion we obtain $\bar R$ in terms of these constants, while the \textbf{Kink} equation becomes
\begin{equation} \frac{1}{2} ( e^{\phi_0} \pm \frac{3}{2\mu} f_0)(f_0 - \frac{2}{\ell}e^{\phi_0})(f_0 + \frac{2}{\ell}e^{\phi_0})=0\, ,\end{equation}
yielding AdS$_3$ or warped AdS$_3$, respectively for $f_0= \pm \frac{2}{\ell}e^{\phi_0}$ and  $e^{\phi_0} =\mp \frac{3}{2\mu} f_0$. 

Along these lines, it is interesting to note that constancy for $\phi$ implies the same for $f$, and vice versa. This can be easily checked by setting one of the two functions to a constant value and studying the equations of motion for the other. A short treatment for when $f=0$ is given in~\cite{Deser:2009er}.

\subsection{The Ansatz}\label{sec:ansatz}
Let us focus again on the \textbf{CKV} equation. If we view this as the sum of two conformal Killing equations coming separately from the Einstein part and Chern-Simons part, we can only obtain the AdS$_3$ solution. Trying to relax this idea, we can allow for a ``mixing'' of the functions appearing in the two gradients. For instance, focus on the first term
$$e^{2\alpha\phi}D_a(e^{-2\alpha\phi}D_b e^{\phi})= e^{\phi} ( D_a D_b \phi + ({1}-2\alpha)D_a\phi D_b\phi),$$
and write out the function $f$ as 
\begin{equation}f= \pm 2\mu k \expab{2}{-1} + \expab{2}{-2}\tilde{f} ~\end{equation}
for a constant $k$. Inserting this into the second term of the \textbf{CKV} equation yields
\begin{equation*}
\begin{array}{rl}
\pm\frac{1}{2\mu}\expab{2}{-2} D_{(a}\left( \expab{-2}{+2} D_{b)}(\expab{-2}{+2}f)\right)=& k \, e^{ \phi} ( D_a D_b \phi + (-2\alpha +3 )D_a\phi D_b\phi) \\ \\ 
&\pm \frac{1}{2\mu}\expab{2}{-2}D_{(a}(\expab{-2}{+2}D_{b)}\tilde{f}).
\end{array}
\end{equation*}

The most obvious approach is to impose that $\tilde{f}$ is zero so that we are left with the conformal Killing vector equation. For $k\neq\tfrac{{1}-2\alpha}{2\alpha-3}$ the left-hand side of the equation becomes
\begin{multline*}
e^{\phi} (1+k)\left(D_a D_b \phi + \frac{({1}-2\alpha)+k(-2\alpha+3)}{1+k} D_a\phi D_b \phi\right)
=  e^{({1}-\epsilon)\phi}\frac{1+k}{\epsilon}D_a D_b e^{\epsilon\phi},\end{multline*}
when
\[\epsilon = \frac{({1}-2\alpha)+k(-2\alpha+3)}{1+k} \]
is well defined and non-zero. 
That is, $k\neq -1$ and $k\neq\tfrac{{1}-2\alpha}{2\alpha-3} $. It is then natural to take $d\psi=d e^{\epsilon\phi}$ in proposition \ref{prop:dpsi}. If, on the other hand, we start by imposing $d\psi=d e^{\epsilon\phi}$, then $\tilde{f}$ appears in the \textbf{CKV} equation that now takes the form of a conformal Killing vector equation, and so is fixed by using the \textbf{F} equation of motion to satisfy proposition \ref{prop:twopsi}. This way all fields are fixed and in particular
\begin{subequations}\begin{align} 
f& = \pm2\mu k \expab{2}{-1}+\tilde{k} e^{(4\alpha-4+\epsilon)\phi}+\delta \expab{2}{-2} & \text{when }\epsilon&\neq 2-2\alpha \Leftrightarrow k\neq1\\
 f &= \pm2\mu k \expab{2}{-1}+ \expab{2}{-2}(\tilde{k}\phi+\delta)&\text{when }\epsilon&=2-2\alpha \Leftrightarrow k=1~.
\end{align}\end{subequations}

The metric one obtains by choosing the conformal Killing generator to be $\psi=e^{\epsilon\phi}$ for some $\alpha$ is equivalent to the one obtained by the choice $\psi=\phi$ for $\alpha'=\alpha+\epsilon/2$. Our Ansatz is thus to assume $d\psi=d\phi$ is a conformal Killing one-form. We set
\[k=\frac{1-2\alpha}{2\alpha-3} ~,\]
and by using proposition \ref{prop:twopsi}, which is satisfied by the \textbf{F} equation of motion,  $f$ is given by
\begin{subequations}\label{eq:finalAnsatz}\begin{align}
 f &= \pm 2\mu k \expab{2}{-1} + \tilde{k}\expab{4}{-4}+\delta\expab{2}{-2} &\textrm{if }\alpha&\neq\frac{3}{2},1 \\
 f &= \pm 2\mu  e^\phi + \tilde{k}\phi+\delta &\textrm{if }\alpha&=1~,
 \end{align}\end{subequations}
whereas the metric is given by \eqref{eq:dpsimetric} with $\psi=\phi$.  Our Ansatz can thus be summarized by the statement $\psi=\phi$ in proposition \ref{prop:dpsi}. This way the \textbf{CKV} equation is automatically satisfied and at the same time all fields are fixed. It remains to show that the other {three} equations of motion are satisfied for suitable values of $\alpha$, $\tilde{k}$, $\delta$ and $c$. In appendix \ref{app:stataxis} we compare our method to what has been done for stationary axisymmetric solutions.

\section{Solutions}\label{sec:solutions}
In this section we check the consistency of our Ansatz, namely which functions $f$ and $\phi$ related by our Ansatz satisfy the reduced TMG equations of motion. Starting with \eqref{eq:finalAnsatz}, we use the equations of section \ref{subsec:eoms} to calculate the expressions for $|d\phi|^2$, $\bar R$ and $D^2\phi$ in terms of $\phi$. We then use proposition \ref{prop:Zcons} and compare $ZD^2\phi$, that is $Z$ acting on the expression for $D^2\phi$, with the expression for $-\bar R$ obtained previously. When the two expressions match, the equation for $D^2\phi$ implies that of $\bar R$. Finally, the consistency of the equation for $|d\phi|^2=\phi'$ is checked by the integral of the equation for $D^2\phi=\phi''/\phi'$. Schematically, the consistency involves checking the following derivations
\begin{align*}
|d\phi|^2 &= F(\phi;\alpha,c,\tilde{k},\delta) \\ &\Downarrow  G=\partial_\phi F \\ 
D^2\phi &= G(\phi;\alpha,c,\tilde{k},\delta)\\&\Downarrow H=\partial_\phi G\\
\bar{R} &= H(\phi;\alpha,c,\tilde{k},\delta)~.
\end{align*}

The resulting conditions are in terms of long expressions involving exponentials of $\phi$, schematically
\[ \sum_{(m,n)\in S} \expab{m}{+n}~.\]
Recall the first consistency check is an equation of the type
\begin{equation*} Z D^2\phi +\bar{R} = 0 ~.\end{equation*}
The simplest approach is to consider all the powers to be different, $m\alpha+n\neq m'\alpha+n'$, so that their coefficients have to vanish separately. We thus obtain three cases:
\begin{enumerate}\label{eq:3cases}
\item $\delta=0$, $\alpha={1}/2$ and $\tilde{k},c$ unconstrained;
\item $c-\delta^2=\tilde{k}=\alpha=0$;
\item $\delta=c=\tilde{k}=0$ and $\mu^2\ell^2(2\alpha+{1})^2=(2\alpha-3)^2~.$
\end{enumerate}
One need also check the cases when the powers mentioned above are not all different. This happens when 
\[\alpha=0,1/2,3/4,1,9/8,7/6,5/4,4/3,11/8,5/3,7/4,2,5/2,3.\]
For each of these values we simplify the result, but again find the same three possible solutions.

The final check is to verify that the expression for $|d\phi|^2$ is also satisfied. We therefore integrate the expression for $D^2\phi$
\[ D^2\phi=H(\phi)\rightarrow \phi''=H(\phi)\phi' \rightarrow |d\phi|^2=\phi'=\int H d\phi + d ~,\]
for a function $H(\phi)$ of $\phi$, and compare with the expression for $|d\phi|^2$. One finds that for a suitable integration constant $d$, the two expressions always match for the three cases above. 

We will now write down and identify the three classes of solutions that can be obtained via our Ansatz.

\subsection{Case 1\texorpdfstring{: $\delta=0$, $\alpha={1}/2$}{}}
In this case our generalised Ansatz simply becomes
\[ f = \tilde{k} e^{-2\phi} ~,\]
so that $  A = -\frac{\tilde{k}}{2}e^{-2\phi} dt$.
 Solving the equations of motion we get
\begin{equation*}
\begin{array}{ll}
D^2\phi &=\frac{1}{2} ( c \mp 2\mu\tilde{k})e^{-\phi}+\frac{1}{\ell^2}e^\phi-\frac{3}{4}\tilde{k}^2 e^{-3\phi}, \\ \\
\bar{R} &= \frac{1}{2} (c\mp 2\mu\tilde{k})e^{-\phi}-\frac{1}{\ell^2} e^{\phi}-\frac{9}{4}\tilde{k}^2 e^{-3\phi}.
\end{array}
\end{equation*}
Integrating $D^2\phi=\phi''/\phi'$ 
 \[ |d\phi|^2=\phi'=-\frac{1}{2} ( c \mp 2\mu\tilde{k})e^{-\phi}+\frac{1}{\ell^2}e^\phi+\frac{1}{4}\tilde{k}^2 e^{-3\phi} +d\]
and inserting into the dilaton equation (along with $D^2\phi$) we find that $d=0$. 

The 3-dimensional metric can now be written in the $\phi$ coordinate as
\begin{equation} g = e^\phi \left( \frac{d\phi^2}{\frac{\tilde{k}^2}{4} e^{-3\phi} + \frac{\gamma}{2}e^{-\phi}+\frac{1}{\ell^2} e^{\phi}} \mp (\frac{\tilde{k}^2}{4} e^{-3\phi} + \frac{\gamma}{2}e^{-\phi}+\frac{1}{\ell^2} e^{\phi}) dt^2 \right)
\pm e^{2\phi} ( dz  -\frac{\tilde{k}}{2}e^{-2\phi} dt)^2.\end{equation}

Identifying this and the other metrics is particularly easy due to the classification of algebraically special solutions to TMG~\cite{Chow:2009km}. We suspect we are dealing with constant scalar invariant spaces (CSI), after evaluating the first three curvature invariants, in which case they are CSI Kundt, locally homogeneous, or both~\cite{Coley:2007ib,Chow:2009vt}. Furthermore, the Ansatz we use implies two commuting symmetries $\partial_t$ and $\partial_z$. To identify which particular Petrov-Segre class we are in, we study the Jordan normal form of the tensor
$$S_a^{\phantom a b}=R_a^{\phantom a b}-\dfrac{1}{3}R\delta_a^{\phantom a b}.$$
For Case 1, the canonical $S_a^{\phantom a b}$ turns out to be identically zero, \textit{i.e.} the solution is of Petrov class O, corresponding to locally AdS$_3$.

\subsection{Case 2\texorpdfstring{: $c-\delta^2=\tilde{k}=\alpha=0$}{}}
The Ansatz here boils down to
$$f=\mp\dfrac{2\mu}{3}e^{-\phi} + \delta e^{-2\phi},$$
so that $ A=\left(\pm\dfrac{2\mu}{3} e^{-\phi} -\dfrac{1}{2}\delta e^{-2\phi}\right)dt$.
From the equations of motion we obtain
that 
$$\phi'=\mp\dfrac{2}{3}\mu\delta e^{-\phi}+\dfrac{1}{4}\delta^2e^{-2\phi}+d .$$
Furthermore we get that $d=\tfrac{3}{\ell^2}+\tfrac{1}{9}\mu^2$ and
$$\bar R=-\delta^2e^{-2\phi}\pm\dfrac{2}{3}\mu\delta e^{-\phi}.$$
The full 3-dimensional metric is then 
\begin{equation}
\begin{array}{ll}
g_3=&\dfrac{1}{\mp\frac{2}{3}\mu\delta e^{-\phi}+\frac{1}{4}\delta^2e^{-2\phi}+d}d\phi^2 +\left(\pm\frac{2}{3}\mu\delta e^{-\phi}-\frac{1}{4}\delta^2e^{-2\phi}-d\right) dt^2 \\ \\ &+ \left(e^{\phi}dz + \left(\pm\frac{2}{3}\mu -\frac{1}{2}\delta e^{-\phi}\right)dt\right)^2.\end{array}\end{equation}

For this solution, the canonical $S_a^{\phantom a b}$ is given by 
\begin{equation*}S_a^{\phantom a b}=\left(
\begin{array}{ccc}
-\frac{2(\nu^2-1)}{\ell^2} & 0 & 0 \\
0 & \frac{\nu^2-1}{\ell^2} & 0 \\
0& 0& \frac{\nu^2-1}{\ell^2}
\end{array}\right),
\end{equation*}
placing it into Petrov class D, whence by the theorem in~\cite{Chow:2009km} it is locally spacelike or timelike warped AdS$_3$. In fact, case 2 covers both spacelike and timelike stretching. Indeed, one can easily find the diffeomorphism that will bring the metric to one of the standard forms
\begin{equation}  g = \frac{\ell^2}{\nu^2+3}\left( \frac{dy^2}{y^2 - \delta} \mp \left(y^2 - \delta\right)du^2 \pm \frac{4\nu^2}{\nu^2+3}\left( d\tilde{t} +  y du\right)^2\right)~,\end{equation}
where the two values $\delta=0,1$ are isometric, see e.g.\cite{Jugeau:2010nq}. The sign above distinguishes spacelike and timelike stretching and is the same as the one we used to distinguish between spacelike or timelike KK reduction. 

\subsection{Case 3\texorpdfstring{: $\delta=c=\tilde{k}=0$ and $\mu^2\ell^2(2\alpha+{1})^2=(2\alpha-3)^2$}{}}\label{subsec:Case3}
The general Ansatz here is
\[ f=\pm 2\mu \frac{1-2\alpha}{2\alpha-3} \expab{2}{-1}, \]
so that $ A=\mp \frac{2\mu}{2\alpha-3} \expab{2}{-1} dt$.
The equations of motion here yield that 
\[ D^2\phi ={2}\left( \frac{\alpha}{\ell^2}+\mu^2 \frac{({1}-2\alpha)(2\alpha^2+3\alpha)}{(2\alpha-3)^2}\right)e^{2\alpha\phi} ~.\]
Consistency with the dilaton equation requires a $d=0$ integration constant and
\[ \phi'= \frac{4\mu^2}{(2\alpha-3)^2}e^{2\alpha\phi} ~.\]
The 3-dimensional metric is therefore given by
\begin{multline}
g = e^{2\alpha\phi}\left( \frac{d\phi^2}{ \frac{4\mu^2}{(2\alpha-3)^2}e^{2\alpha\phi} } \mp \frac{4\mu^2}{(2\alpha-3)^2}e^{2\alpha\phi} dt^2 \right) \pm e^{2\phi} \left( dz 
\mp \frac{2\mu}{2\alpha-3} \expab{2}{-1} dt \right)^2 .\end{multline}

Again, to identify this solution we look for the Jordan normal form of the traceless Ricci tensor $S_a^{\phantom a b}$, which in this case is
\begin{equation*}S_a^{\phantom a b}=\left(
\begin{array}{ccc}
0& 1 & 0 \\
0 & 0 & 0 \\
0& 0& 0
\end{array}\right),
\end{equation*}
corresponding to the Petrov class N. When $\nu\neq \pm1/3$, a coordinate transformation can bring the metric to the form of an AdS pp-wave
\begin{equation}\label{eq:ourppwave} g = \frac{\ell^2}{4} \frac{d\rho^2}{\rho^2} + s_1 \rho^{\frac{1}{2}(1-3\nu\, s_2)} dz^2 +\rho\, dz \,dt~,\end{equation}
where $s_1$ and $s_2$ are uncorrelated signs. The sign $s_1$ keeps track of the sign of the KK reduction we used and the sign $s_2$ comes from the two possible solutions for $\alpha$. When $\nu=\pm1/3$, the solution for $\alpha$ is unique and our metric becomes that of AdS$_3$ in Poincar\'e coordinates. 

The pp-wave \eqref{eq:ourppwave} then corresponds to a TMG wave \cite{Gibbons:2008vi} with two commuting symmetries. It is locally isometric\footnote{the diffeomorphism in~\cite{Gibbons:2008vi} has an arbitrary function $f_1(z)$ that here should be a constant, see also the appendix in~\cite{Anninos:2010pm}.} to the Schr\"odinger sector solutions of \cite[\S4.2]{Ertl:2010dh} for their $b=0$, which were found and their causal structure analyzed in \cite{Clement:1994sb}. Our Ansatz is thus seen to reproduce locally all known stationary axisymmetric solutions to TMG~\cite{Ertl:2010dh} for generic values $\ell$ and $\nu$, with the excpetion of the $b\neq0$ in \cite[\S4.2]{Ertl:2010dh}. Some more details on this comparison are given in our appendices~\ref{app:mapsol} and~\ref{app:stataxis}.

\section{Conclusion}
The search for new solutions to TMG has lead us to exploring the power and range of the ``kinky'' approach to 3d-gravity as used in \cite{Guralnik:2003we}. The idea of using an exact conformal Killing vector to simplify the reduced two-dimensional equations of motion seems very effective in leading to a whole range of possible solutions depending on a small set of parameters. However, the theorems we used impose strong restrictions on the Ansatz. A subset of valid parameter values is selected that corresponds to the already well-known and studied solutions of locally AdS$_3$, warped AdS$_3$ and the pp-wave. 

Appealing as the Kinky Ansatz may look, it requires too much symmetry to yield any novel solutions. Nonetheless, this is a new, simplified way to obtain the most symmetric TMG backgrounds.  We note how a simple and local Ansatz can reproduce a large class of the known stationary axisymmetric solutions in \cite{Ertl:2010dh}, without assumptions on the asymptotics. In this setting, the relationship between these is in terms of the functional dependence of the generator of a conformal isometry. 

Our Case 3 corresponds to the special case of the family $W_1=-2/\ell$ of type N CSI Kundt solutions where the $f_{01}(u)$ in \cite{Chow:2009vt} is constant. In this way, their general solution acquires an extra isometry, which is precisely what our Ansatz requires. One might wonder whether our Ansatz can be generalized to include other deformations of AdS$_3$, warped AdS$_3$ or the wave - see also appendix~\ref{app:mapsol} for a map of the more general solutions. Another natural question is whether the core idea behind this Ansatz, which was to automatically satisfy the traceless part of the Einstein equation, can be useful in studying other gravitational systems. 

\section*{Acknowledgements}
We would like to thank Gerard Clement for helpful correspondence and Frederic Jugeau for his contribution during the initial stages of this work. P.R. thanks Arjun Bagchi and Jos\'e Figueroa-O'Farrill for useful discussions. 

\appendix

\section{Reduction of Cotton tensor}\label{app:cotton}
For the Kaluza-Klein Ansatz
\begin{equation*}
  g=e^{2\alpha\phi}\bar{g} \pm e^{2\phi}\left(dz+A\right)^2 
\end{equation*}
we can choose an orthonormal basis $\theta^a=e^{\alpha\phi}\bar\theta^a$ and $\theta^z=e^\phi (dz+A)$, where $\bar\theta^a$ is an orthonormal basis of $\bar{g}$. We also define
\begin{align*}
 F &=:  \frac{1}{2} F_{ab} \,\bar{\theta}^a\wedge \bar{\theta}^b \\
 d\phi &=: d\phi_a \, \bar\theta^a~.
\end{align*}
The spin connection of the 3d geometry $\omega_{AB}$ can be solved as
\begin{equation}\label{eq:KKspin}\begin{aligned}
 \omega^{ab} &= \bar{\omega}^{ab} + \alpha (d\phi^b \bar{\theta}^a - d\phi^a \bar{\theta}^b)\mp \frac{1}{2} e^{-2(\alpha-1)\phi}F^{ab}(dz+A)~,
 \\
 \omega^{\bar{z}}{}^a&=e^{-(\alpha-1)\phi}\frac{1}{2} F^a{}_b \bar{\theta}^b +  e^{-(\alpha-1)\phi} d\phi^a (dz+A)~,
\end{aligned}\end{equation}
where $\bar\omega_{ab}$ the spin connection of the 2d geometry.

The information of the curvature two-form
\begin{equation*}
\Omega_{AB}=d\omega_{AB}+\omega_{AC}\wedge \omega^C{}_B =\frac{1}{2}R_{ABCD}\theta^C\wedge\theta^D\end{equation*}
can be encoded for $d=3$ in the Ricci tensor $R_{AB}=R_{ACB}{}^C$. Its components for arbitrary dimension $d$ (in our case $d=3$) are
\begin{equation}\label{eq:KKRicci}\begin{aligned}
 &\begin{aligned}
  R_{ab} &= e^{-2\alpha\phi}\bar{R}_{ab} \mp \frac{1}{2} e^{-4\alpha\phi+2\phi} F_{ac}F_b{}^c 
  - e^{-2\alpha\phi}\left(  \left(\alpha(d-2)+1\right)D_a d\phi_b +\alpha \, 
  D_c d\phi^c \eta_{ab} \right) \\
  &+ e^{-2\alpha\phi} \left(  \left(\alpha^2(d-2)+(2\alpha-1)\right) d\phi_a d\phi_b - \left( \alpha^2(d-2)+\alpha\right)d\phi^c d\phi_c \eta_{ab} \right) 
 \end{aligned}\\
 &
   R_{a}{}^{z} = - e^{-3\alpha\phi+\phi} \frac{1}{2} D_c F^c{}_a + e^{-3\alpha\phi+\phi} \left( 
   -\frac{3}{2} +(2-\frac{d}{2})\alpha
   \right) d\phi^d F_{da} \\
   &
   R_{z}{}^{z}  = - e^{-2\alpha\phi} D_b d\phi^b + e^{-2\alpha\phi}\left( (2\alpha-1)-d\alpha\right)d\phi_c d\phi^c \pm e^{-4\alpha\phi+2\phi} \frac{1}{4} F_{ab} F^{ab}~,
\end{aligned}\end{equation}
where $\bar{R}_{ab}$ is the Ricci tensor of the $d-1$ metric in the basis $\bar\theta^a$.

Using the Ricci tensor \eqref{eq:KKRicci} and the spin connection \eqref{eq:KKspin}, we can find the components of the Cotton tensor
\[ C_{MN} = \epsilon_{M}{}^{PQ} ( E_{NQ;P}+\frac{1}{4} g_{NQ} R_{;P} ) \]
in the orthonormal basis of $\theta^a$. E.g., for a 4d tensor $T_{AB}$ we have
\begin{multline*} E_{az;b} =  e^{-\alpha\phi} D_b E_{az} + \alpha e^{-\alpha\phi} ( \eta_{ab} d\phi^d E_{dz} - d\phi_a E_{bz} ) \\- \frac{1}{2} e^{(-2\alpha+1)\phi} F_{ab} E_{zz} \pm \frac{1}{2} e^{(-2\alpha+1)\phi} E_{ad} F^d{}_b ~,\end{multline*} etc. A useful relation we use in two dimensions to simplify the result is
\[ \epsilon_a{}^c \epsilon_b{}^d T_{(cd)} = \pm ( T_{(ab)} - \eta_{ab} T_c{}^c ) ~.\]
We thus find:
\begin{equation}
\begin{aligned}
\pm \expab{4}{-} C_{ab} &=
\frac{1}{2} (D_a D_b - \eta_{ab} D^2) f - \frac{1}{4} \eta_{ab} f \bar{R} -\frac{1}{2} \eta_{ab} \expab{-2}{+2} f^3 + (1-\alpha) f D_a D_b \phi 
\\
&
+ 4 (\alpha-1)^2 f D_a \phi D_b \phi +\frac{3}{2}(\alpha-1) f \eta_{ab} D^2 \phi - 3(\alpha-1)^2 \eta_{ab} f |d\phi|^2\\
&+\frac{3}{2}(1-\alpha) (D_a \phi D_b f + D_b \phi D_a f) +\frac{5}{2}(\alpha-1)\eta_{ab} D_c\phi D^c f
\end{aligned}
\end{equation}
and
\begin{equation}
C_{az} = \expab{-}{-2} \epsilon_a{}^b D_b \left( \expab{-2}{+2} \left( -\frac{1}{4} \bar{R} - \frac{3}{4} \expab{-2}{+2} f^2 + \frac{1}{2}(\alpha-1)D^2 \phi \right)\right)~.
\end{equation}
The remaining component is $C_{zz}=\mp \eta^{ab} C_{ab}$, which follows because the Cotton tensor is traceless. 

If $E_{MN}$ are the 3d equations of motion, then variation of the reduced action with respect to $g^{ab}$ is proportional to $e^{(2\alpha+1)\phi} E_{ab}$ and variation with respect to $A_\mu$ is proportional to $-2 e^{(\alpha+21)\phi} E_{az}$. What we called the \textbf{F} equation of motion is the first integral of the latter. For the dilaton equation we simply used $R=-\frac{6}{\ell^2}$ and contracted the indices of \eqref{eq:KKRicci}. Alternatively, variation of the reduced action with respect to $\phi$ gives a term proportional to $-2 \,e^{2\alpha\phi} (\pm E_{zz} +\alpha E^a{}_a)$.

Note that, having reduced the theory to $d=2$, the subsequent analysis following section \ref{sec:Ansatz} is made at the level of the equations of motion and not the lagrangian.

\section{Null exact CKV}\label{app:null}
The propositions we use for $d\psi$ non-null in section \ref{sec:theorems} can be altered for a null $d\psi$:
\begin{lemma}
  If $\bar{g}$ has a Conformal Killing one-form $d\psi$ that is {null} and exact, then the metric is flat and for some coordinate system
\begin{equation} \bar{g} =du dv \text{ and }\psi=v~.\end{equation}
\end{lemma}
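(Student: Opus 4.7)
The plan is to follow the conformal-gauge strategy used in the proof of Proposition \ref{prop:dpsi}, but now to exploit the extra rigidity that nullness of $d\psi$ imposes. First I would pick local null coordinates so that $\bar g = \Lambda(u,v)\, du\, dv$; the conformal Killing equation in such a gauge forces $X^u$ to depend only on $u$ and $X^v$ only on $v$, so every CKV is of the form $X = g(v)\partial_v + h(u)\partial_u$, with metric dual
\[
\bar g(X,\cdot) \;=\; \tfrac{1}{2}\Lambda\bigl(g(v)\,du + h(u)\,dv\bigr).
\]
Setting this equal to $d\psi$ and computing $|d\psi|^2_{\bar g} = \Lambda\, g(v)\,h(u)$, the null condition forces either $g\equiv 0$ or $h\equiv 0$. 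Without loss of generality I would take $h\equiv 0$, so that $d\psi = \tfrac{1}{2}\Lambda\, g(v)\, du$.

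Second, I would use exactness. Matching $d\psi = \psi_u\, du + \psi_v\, dv$ with $\tfrac{1}{2}\Lambda g(v)\, du$ yields $\psi_v = 0$, so $\psi = \psi(u)$, together with $2\psi'(u) = \Lambda(u,v)\, g(v)$. The left-hand side is independent of $v$, hence the conformal factor factorises as $\Lambda(u,v) = 2\psi'(u)/g(v)$. Introducing adapted coordinates
\[
u' \;=\; 2\psi(u), \qquad v' \;=\; \int \frac{dv}{g(v)},
\]
one gets $du'\,dv' = \Lambda\, du\, dv$, so $\bar g = du'\,dv'$, which is manifestly flat, while simultaneously $\psi = u'/2$. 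A final swap $u' \leftrightarrow v'$ with a linear rescaling to absorb the factor $1/2$ brings the data into the claimed normal form $\bar g = du\, dv$ with $\psi = v$.

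The one subtlety I would want to be careful about is the domain of the coordinate change: it is valid only on the open set where $g(v)$ and $\psi'(u)$ are nonzero. But $d\psi$ itself vanishes wherever $\psi'(u) = 0$, so the statement is naturally a local one on the open set where the exact null conformal Killing one-form is nowhere zero, consistently with the other propositions in section \ref{sec:theorems}. I do not expect a deeper obstacle, since in two dimensions the structure of CKVs in conformal gauge is already completely explicit and the nullness condition collapses the two chiral pieces to one.
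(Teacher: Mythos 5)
Your argument is correct and follows exactly the route the paper intends: the paper only remarks that the proof is ``straightforward and similar to the non-null case,'' i.e.\ the conformal-gauge analysis of Proposition~\ref{prop:dpsi}, which is precisely what you carry out, with the nullness condition killing one chiral component and exactness fixing the conformal factor. Your closing remark about working locally where $d\psi\neq 0$ is a reasonable and harmless refinement consistent with the local nature of all the statements in section~\ref{sec:theorems}.
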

\begin{lemma}
  Assume two Conformal Killing one-forms, $F_1 d F_2$ and $d\psi$, with $d\psi$ exact and {null} and $F_1$ and $F_2$ functions of $\psi$. They are necessarily related by
\[ F_1 d F_2 = G d\psi, \]
where the $G$ is a function of $v$.
\end{lemma}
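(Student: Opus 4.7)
The plan is to mirror the proof of Proposition~\ref{prop:twopsi} but work with the adapted null coordinates furnished by the preceding null analogue of Proposition~\ref{prop:dpsi}. First I would pass to coordinates in which $\bar g = du\,dv$ and $\psi = v$. In this flat null metric, a short calculation (the conformal Killing equation decouples into $\partial_u X_u = \partial_v X_v = 0$) shows that the general conformal Killing vector has the form $X = g(v)\partial_v + h(u)\partial_u$ for some smooth $g$, $h$, with metric-dual one-form
\[ X^\flat = \tfrac{1}{2}\bigl(g(v)\,du + h(u)\,dv\bigr). \]

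Next I would feed in the hypothesis that $F_1$ and $F_2$ depend only on $\psi = v$, which gives
\[ F_1\,dF_2 = F_1(v)\,F_2'(v)\,dv, \]
so the $du$-component of this one-form is automatically zero. Identifying $F_1\,dF_2$ with the dual of a CKV forces $g(v) = 0$ in the $du$ slot, while the $dv$ slot reads $h(u) = 2\,F_1(v)\,F_2'(v)$. Setting $G(v) := F_1(v)\,F_2'(v)$ one immediately reads off $F_1\,dF_2 = G(v)\,dv = G\,d\psi$, which is the claimed relation.

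The expected obstacle is essentially nil, but the contrast with Proposition~\ref{prop:twopsi} is worth flagging. In the non-null case the conformal factor $\psi'(x)$ mixed $dx$ and $dt$ into the CKV dual, so a separation-of-variables argument was needed both to kill the off-diagonal term and to pin $\tilde k$ down to a genuine constant. Here the adapted coordinates make $\bar g$ flat and cross-term free from the start, and one only obtains the weaker statement that $G$ is a function of $v$. If one wished, pushing the equality $h(u) = 2\,G(v)$ further would force $G$ to be constant (a function of $u$ equated to a function of $v$), but this stronger conclusion is not needed for the statement as given.
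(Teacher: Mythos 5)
Your proof is correct and follows the same route the paper intends (it only remarks that the argument is ``straightforward and similar to the non-null case''): pass to the adapted coordinates $\bar g = du\,dv$, $\psi=v$, write the general conformal Killing dual $\tfrac12\bigl(g(v)\,du+h(u)\,dv\bigr)$, and compare components exactly as in the proof of Proposition~\ref{prop:twopsi}. Your closing observation is also accurate: equating the $v$-dependent slot to $h(u)$ in fact forces $G$ to be constant, so the lemma as stated (with $G$ merely a function of $v$, which already follows from the hypothesis that $F_1,F_2$ depend only on $\psi$) is weaker than what the conformal Killing condition actually yields.
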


The proof of these two lemmas is straightforward and similar to the non-null case. Since $\bar{g}$ is flat, the null case is too restricitve and the Ansatz is not consistent. Notice for instance that the \textbf{Dil.} equation of motion, with $\bar{R}=D^2\phi=|d\phi|^2=0$, implies an imaginary $f$ or $\phi$.

\section{Known solutions to TMG for generic \texorpdfstring{$\nu$}{parameters}}\label{app:mapsol}
The largest set of known solutions is the Kundt class \cite{Chow:2009vt}. The only other known solution, which technically is not a Kundt space, is the timelike warped AdS$_3$ space. The Kundt class of metrics is known in the sense that one can always solve for the metric components given a small set of arbitrary functions that enter their equations of motion. However, there is a subset of Kundt solutions, the ones that are also of Constant Scalar curvature Invariants (CSI), whose solution is known in closed\footnote{more precisely, in terms of integrals of the arbitrary functions.} form~\cite{Chow:2009vt}. Again, the Kundt CSI spacetimes are given in terms of a set of arbitrary functions.

Interestingly, the Kundt CSI spaces can be classified in three classes in terms of their curvature invariants, and for each class they take the same values as one of the three symmetric solutions of section \ref{sec:solutions}: AdS$_3$, (spacelike) warped AdS$_3$ and the pp-wave. That is, any other Kundt CSI solutions can be referred to as a deformation of one of these three spaces. A map of all known solutions is given in figure \ref{fig:mapsol}.

\begin{figure}\begin{center}
\begin{tikzpicture}[grow=right, level distance=3cm]
\node[text width=2cm] {Known solutions} child { node {Kundt} child {node {CSI Kundt} child {node[text width=2.5cm]{stationary axisymmetric Kundt}} }}
child {node {timelike warped AdS$_3$}};
\end{tikzpicture}
\caption{A map of the known solutions to TMG.}\label{fig:mapsol}\end{center}
\end{figure}
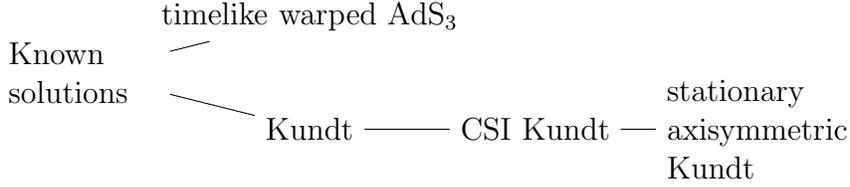

Prior to the work of  \cite{Chow:2009vt}, one large class of Kundt solutions that was known was that of the pp-wave~\cite{Gibbons:2008vi}. The metric can be written in the form
\begin{equation}\label{eq:genpp1} g = d\rho^2 + 2 e^{2\rho} du dv + \left( e^{(1\mp3\nu)\rho} h_1(u) + h_2(u)\right)du^2\end{equation}
with $h_1$ and $h_2$ arbitrary functions. An automorphism of the form \eqref{eq:genpp1} given in~\cite{Gibbons:2008vi} can bring the metric to
\begin{equation}\label{eq:genpp2} g = d\rho^2+ 2 e^{2\rho} du dv + e^{(1-3\nu)\rho} f(u) du^2 ~.\end{equation}
However, in our case 3, such a diffeomorphism is not possible because there $h_1$ and $h_2$ are not arbitrary functions. As a result, the metric of \eqref{eq:ourppwave} is
\begin{equation}\label{eq:stataxispp} g = \frac{\ell^2}{4} \frac{d\rho^2}{\rho^2}+ \left( \pm\rho^{\frac{1}{2}(1\pm3\nu)}+ {{b}}\right) du^2 +\rho\, du \,dv~,\end{equation}
with $b=0$, and is known to not be isometric to \eqref{eq:stataxispp} with $b\neq0$. A different proof of this, based on the dimension of Killing vectors, was given in the appendix of~\cite{Anninos:2010pm}.

The final large class of known solutions are the stationary axisymmetric ones: AdS$_3$ (case 1), spacelike and timelike warepd AdS$_3$ (case 2), and the stationary axisymmetric wave that is precisely \eqref{eq:stataxispp}. For $b=0$ we recover case 3, whereas all $b\neq0$ are among them isometric by a rescaling of the coordinates.

\section{Stationary Axisymmetric solutions}\label{app:stataxis}
A powerful apparatus for stationary axisymmetric solutions given in~\cite{Clement:1994sb} shows that the stationary axisymmetric solutions are equivalent to the motions of a point-like mechanical system. In particular, the two commuting symmetries imply the following metric form
\begin{equation}\label{eq:ClementAnsatz} g= \bar{g}_{\mu\nu}(\rho) dx^\mu dx^\nu + \frac{1}{|\mathbf{X}(\rho)|^2} d\rho^2 \end{equation}
with the unimodular matrix
\[ \bar{g}=\begin{pmatrix} X^+ & Y \\ Y & X^- \end{pmatrix} ~,\]
so that the Lorentz vector $\mathbf{X}=(X^+,X^-,Y)$ is subject to extrema of the lagrangian
\[ S[\mathbf{X}]= \int d\rho \left( \frac{1}{2} \dot{\mathbf{X}}^2-\frac{2}{\ell^2} -\frac{1}{2\mu} \mathbf{X}\cdot (\dot{\mathbf{X}}\times \ddot{\mathbf{X}}) \right) ~.\]

Although stationary axisymmetric solution were initially found in terms of An\-s\"a\-tze for the functional dependence of $\mathbf{X}(\rho)$, they were rederived in \cite{Ertl:2010dh} from first principles. This can be achieved by consistently reducing the phase space by imposing $\ddot{\mathbf{X}}^2=\dot{\mathbf{X}}\cdot\ddot{\mathbf{X}}=0$.  There is only numerical evidence for stationary axisymmetric solutions that do not obey this condition~\cite{Ertl:2010dh}.

It is natural to ask why our Ansatz fails to reproduce \eqref{eq:stataxispp} for $b\neq0$. As it stands, our method is an Ansatz in the true meaning of the term: a test solution that works. However, it does curiously select the undeformed stationary axisymmetric solutions\footnote{Let us here think of the $b\neq 0$ wave as the ``deformation'' of the $b=0$ case. }. The approach presented in this paper can be compared with the work of~\cite{Clement:1990mp}. The general stationary axisymmetric wave (up to identifications) has \cite{Ertl:2010dh}
\begin{equation}\label{eq:sawaveX} \mathbf{X}=( s_1 \rho^{(1\mp 3\nu)/2} + b, 0 ,\pm\frac{2}{\ell}\rho) ~.\end{equation}
Ultimately, the choice $\psi=\phi$ in our setup from section~\ref{sec:ansatz} leads to
\begin{equation}\label{eq:ultmetric} g= e^{2\alpha\phi} \frac{d\phi^2}{\dot{\phi}} \mp\left(e^{2\alpha\phi}\dot{\phi}-A_t(\phi)^2\right)dt^2 \pm e^{2\phi} dz^2\pm 2 e^{2\phi} A_t(\phi) dz dt ~,\end{equation}
where $\dot{\phi}$ a function of $\phi$. Note that \eqref{eq:ClementAnsatz} with \eqref{eq:sawaveX} and $b\neq0$ cannot be written in the form of \eqref{eq:ultmetric} for a change of coordinates $z=f(\rho)$. Indeed, as we saw in section~\ref{subsec:Case3}, consistency for case 3 required $\delta=c=\tilde{k}=0$ and $\mu^2\ell^2(2\alpha+{1})^2=(2\alpha-3)^2$. The parameters $\alpha,\delta,\tilde{k}$ and $c$ are therefore fixed and do not allow deformations that is necessary for a non-zero $b$. This is one way of understanding how (if not why) our Ansatz selects the $b=0$ solution.

\bibliographystyle{utphys}
\bibliography{kinktmgGRG}

\end{document}